\newtheorem{theorem}{Theorem}
\newtheorem{definition}{Remark}
\newcommand{\IHPC}{A*STAR Quantum Innovation Centre (Q.InC), Institute of High Performance Computing (IHPC), Agency for Science, Technology and Research (A*STAR), 1 Fusionopolis Way, \#16-16 Connexis, Singapore, 138632, Republic of Singapore.\looseness=-1}
\newcommand{\CQuERE}{Centre for Quantum Engineering, Research and Education, TCG CREST, Sector V, Salt Lake, Kolkata 700091, India.\looseness=-1}
\newcommand{\sutd}{Science, Mathematics and Technology Cluster, Singapore University of Technology and Design, 8 Somapah Road, Singapore 487372, Singapore}
\begin{document}


\title{Local contextuality-based self-tests are sufficient for randomness expansion secure against quantum adversaries}


\author{Jaskaran Singh}
\email{jaskaran@gs.ncku.edu.tw}
\affiliation{Department of Physics and Center for Quantum Frontiers of Research \&
Technology (QFort), National Cheng Kung University, Tainan 701, Taiwan}

\author{Cameron Foreman}
\email{cameron.foreman@quantinuum.com}
\affiliation{Quantinuum, Partnership House, Carlisle Place, London SW1P 1BX, United Kingdom}
\affiliation{Department of Computer Science, University College London, London, United Kingdom}

\author{Kishor Bharti}
\email{kishor.bharti1@gmail.com}
\affiliation{\IHPC}
\affiliation{\CQuERE}
\affiliation{\sutd}

\author{Ad\'an Cabello}
\email{adan@us.es}
\affiliation{Departamento de F\'{\i}sica Aplicada II, Universidad de Sevilla, 41012 Sevilla, Spain}
\affiliation{Instituto Carlos~I de F\'{\i}sica Te\'orica y Computacional, Universidad de Sevilla, 41012 Sevilla, Spain}


\begin{abstract}
In quantum cryptography, secure randomness expansion involves using a short private string of random bits to generate a longer one, even in the presence of an adversary who may have access to quantum resources. In this work, we demonstrate that local contextuality-based self-tests are sufficient to construct a randomness expansion protocol that is secure against computationally unbounded quantum adversaries. Our protocol is based on self-testing from non-contextuality inequalities and we prove that our scheme asymptotically produces secure random numbers which are $\mathcal{O}(m\sqrt{\epsilon})$-close to uniformly distributed and private, where $\epsilon$ is the robustness parameter of the self-test and $m$ is the length of the generated random bit string. Our protocol is semi-device-independent in the sense that it inherits any assumptions necessary for the underlying self-test. 
\end{abstract}


\maketitle


\textit{Introduction.---}Random bits are crucial for numerous applications, including cryptography and simulations. Randomness expansion is the most common method of generating random bits, whereby a short string of random bits is used to securely generate a longer one. Quantum theory has proven to be highly effective in performing randomness expansion (RE)~\cite{AM16, HG17, MYC16, BKG18, ROT94, DYS08}. Device-independent quantum randomness expansion (DI-QRE) protocols, which leverage quantum theory, offer the highest level of security and several are even secure in the presence of computationally unbounded quantum adversaries~\cite{C11, CK11, BRC20, PAM10, PM13, FGS13, NBS18}. 
These protocols do not require detailed knowledge of the devices used; instead, they rely on the violation of Bell inequalities~\cite{B64, B66, XSS23} to certify secure randomness. However, DI-QRE protocols have two strict experimental requirements that are challenging to implement in practice: (i) quantum devices need to be isolated from each other and (ii) a detection loophole-free Bell inequality test.

To address these challenges, semi-device-independent quantum randomness expansion (SDI-QRE) protocols have been developed, which relax some of the strict experimental requirements of DI-QRE at the cost of requiring some additional knowledge about the devices used. These protocols offer a valuable trade-off between resource constraints and security guarantees.
Numerous SDI-QRE protocols have been developed, for example, those that constrain the experimental devices behavior by contextuality, steering, Leggett-Garg inequalities, energy constraints or quantum interference effects~\cite{CZM15,NGZ16, BKB17,XSW16, HWCGP17, ATVV21, WBGB22,SSC17, BME17, TZASVV21}.

However, given the increasing capabilities of quantum technologies, it is essential to develop protocols that are secure against quantum adversaries. To this end, only a small number of SDI-QRE schemes with this level of security have been developed. Most of these schemes require that the measurements devices are fully characterized and trusted~\cite{CZY16, AMV18, DWH20}, while another scheme~\cite{S23} requires projective measurements and devices which cannot be classically pre-programmed to yield a set of outcomes.

Our work addresses this issue by proposing a SDI-QRE protocol based on self-testing of contextual correlations that generates universally-composable \cite{C01} random bits secure against a computationally unbounded quantum adversary, without fully characterizing the measurement and preparation devices.
Similar to Bell inequalities, there are non-contextuality (NC) inequalities whose violation serves as a signature of non-classical (contextual) correlations. However, in contrast to Bell inequalities, violation of NC inequalities can be observed on single local systems.
In some cases, whenever contextual correlations are observed that are $\epsilon$-close to the quantum maximum of a NC inequality, one can robustly self-test the underlying quantum state and measurements~\cite{BRV19}. Using this fact, we construct a protocol for SDI-QRE based on self-testing via NC inequalities and prove that our protocol can produce $m$ random bits that are $\mathcal{O}(m\sqrt{\epsilon})$-close (in trace norm) to being uniformly distributed and uncorrelated from a quantum adversary.

Our results extend RE via self-testing, which has been a unique feature of DI scenarios, to the SDI scenario. This is significant because self-testing results in the context of proving the security of DI-QRE are not typically advantageous, since there are several other well-established techniques, such as the entropy accumulation theorem~\cite{DFR20, MFS22} or quantum probability estimation~\cite{XFK20}. In our case, this relation enables us to establish security of the generated random bit string without the need for these techniques, whose applicability in the SDI setting is not always straightforward.
Finally, recent loophole-free experimental implementations of NC inequalities~\cite{WZL22, HXA23} have achieved close to maximal quantum violation of the NC inequalities while also enforcing the required assumptions, demonstrating that our scheme is already experimentally accessible.




\textit{Self-testing from non-contextuality inequalities.---} 

Contextuality (and, in particular, contextuality for ideal, i.e., projective, measurements) is a non-classical feature of quantum theory~\cite{KS67,KCB08,budroni2022kochen} that has found practical applications~\cite{BRX22,xu2024certifying,UZZ13, UZZ20, SBA17,ray2021graph,yu2022quantum,bharti2023power,arora2024computational}, including self-testing~\cite{BRV19,BRX22,saha2020sum,bharti2019local}. In this work, we build on the result of Ref.~\cite{BRV19}, showing that states and measurements of individual quantum systems can be self-tested using certain NC inequalities.

Consider a finite set of measurements represented in quantum theory by the projectors $\{\Pi_i\}_{i=1}^N$, each of them with possible outcomes $a_i\in \lbrace 0,1 \rbrace$. Let $\mathcal{G} = (\mathcal{V}, \mathcal{E})$ be the graph of compatibility of $\{\Pi_i\}_{i=1}^N$, where each of the elements of the set of vertices $\mathcal{V} = \{v_i\}_{i=1}^{N}$ represents one element of $\{\Pi_i\}_{i=1}^N$, and there is an edge $(v_i, v_j) \in \mathcal{E} \subseteq \mathcal{V} \times \mathcal{V}$ if and only if $\Pi_i$ and $\Pi_j$ are jointly measurable (compatible). If $\mathcal{G}=C_N$ with $N \ge 5$ and $N$ odd (i.e., a cycle with $N$ vertices), then the following NC inequality is satisfied by any non-contextual hidden-variable (NCHV) theory \cite{AQB13}:
\begin{equation}
    \mathcal{\beta} = \sum_{i = 1}^N p(1,0|i,i+1) \leq \beta_{\text{NCHV}},
    \label{eq:nc_inequality}
\end{equation}
where $p(1,0|i,i+1)$ is the probability of obtaining outcomes $1$ and $0$ for $\Pi_i$ and $\Pi_{i+1}$ respectively, the sum in $i+1$ is taken such that $N + 1 = 1$
and $\beta_{\text{NCHV}}$ is the maximum of $\beta$ for NCHV theories. In this case, $\beta_{\text{NCHV}} = \alpha(C_N)$, which is the independence number of $C_N$~\cite{CSW14}, while
in quantum theory the maximum of $\beta$ is $\beta_{\text{QT}} = \vartheta(C_N)$, where $\vartheta(C_N)$ is the Lov{\'a}sz number of $C_N$~\cite{CSW14}. See Appendix~\ref{app:selftest} for more details.

An optimal quantum strategy $\mathcal{S} = (\rho, \{ \Pi_i \}_{i=1}^{N})$ consists of a quantum state $\rho$ and a set of projective two-outcome measurements $\{ \Pi_i \}_{i=1}^{N}$ which achieves $\beta = \beta_{\text{QT}}$. Hereafter, we denote by $\ket{a_i}\!\bra{a_i}$ the post-measurement state after an ideal measurement of $\Pi_i$ with outcome $a_i$.

\begin{definition}[Robust self-testing from odd-cycle NC inequalities~\cite{BRV19}]
\label{def:self_test}
    If $\mathcal{G} = (\mathcal{V}, \mathcal{E})$ is a cycle with $N \ge 5$ vertices and $N$ odd, and a quantum strategy $\mathcal{S}=\left(\ket{v_0}\!\bra{v_0}, \lbrace \Pi_i = \ket{v_i}\!\bra{v_i}\rbrace_{i = 1}^{N}\right)$, where $\ket{v_0}\!\bra{v_0}$ is the initial state, $\lbrace v_i\rbrace_{i = 1}^{N} \in \mathcal{V}$ and  $\text{tr}\left(\Pi_i \Pi_j\right) = 0$ for all $(v_i, v_j)\in \mathcal{E}$, achieves $\beta = \beta_{\text{QT}}$, then the NC inequality Eq.~\eqref{eq:nc_inequality} serves as a robust local self-test for this strategy in the sense that for any other quantum strategy $\tilde{\mathcal{S}}= \left(\ket{\tilde{v}_0}\!\bra{\tilde{v}_0}, \lbrace \tilde{\Pi}_i = \ket{\tilde{v}_i}\!\bra{\tilde{v}_i}\rbrace_{i = 1}^{N}\right)$ that achieves $\beta = \beta_{\text{QT}} - \epsilon$, there exists an isometry $V_A$ such that $\norm{V_A\ket{\tilde{v}_i}\!\bra{\tilde{v}_i}V^\dagger_A - \ket{v_i}\!\bra{v_i}} \leq \mathcal{O}(\sqrt{\epsilon})$ for $i = 0, 1, \ldots, N$ and $V^\dagger_A V_A = \mathds{1}$.
\end{definition} 

Here, $\norm{A} = \text{tr}\left(\sqrt{A^\dagger A}\right)$ denotes the trace norm of a matrix $A$. We note that a non-robust version of Remark~\ref{def:self_test}, in which $\epsilon = 0$, also holds for the case of even $N$-cycle NC inequalities~\cite{BRX22}.
Next, we introduce our QRE protocol. 




\begin{figure}
\begin{mdframed}
{\bf Protocol for QRE - Odd $N$-cycle}
\\
\hrule
\vspace{0.3cm}
\textbf{Parameters and notation:}
\begin{itemize}[leftmargin = 0.3cm]
    \item[] $n \in \mathbb{N}$ - Total number of rounds.  

    \item[] $N\geq 5\in \mathbb{N}/2\mathbb{N}$ - Total number of measurements, which is odd and at-least $5$.

    \item[] $\Pi_i$, $i\in \lbrace 1, \ldots, N\rbrace$ - Two-outcome projective measurements with outcomes labeled by $a_i \in \lbrace 0, 1\rbrace$ corresponding to $\mathds{1} - \Pi_i$ and $\Pi_i$ respectively.

    \item[] $\rho$ - Initial quantum state on which the measurements are performed.

    \item[] $\beta$ - Observed value of $N$-cycle NC inequality, as defined in Eq.~\eqref{eq:nc_inequality}, with maximum quantum value $\beta_{\text{QT}}$. 
    
    \item[] $\left(\rho, \lbrace \Pi_i\rbrace_{i = 1}^{N}\right)$ - Quantum realization that obtains $\beta = \beta_{\text{QT}}$.

    \item[] $\epsilon \in \left( 0, 1 \right)$ - Parameter chosen to quantify a permissible value of deviation of $\beta$ from $\beta_{\text{QT}}$.

    \item[] $\omega_0 = \cos\frac{\pi}{N}$ and $\omega_1 = 1$. 

    \item[] $q \in \left[0,1\right]$ and $T, l \in \lbrace 0,1\rbrace$.
\end{itemize}
\hrule
\vspace{0.5mm}
\hrule
\vspace{0.3cm}
\textbf{Procedure}
\begin{enumerate}[leftmargin = 0.3cm]
    \item Alice chooses a quantum strategy $\left(\rho, \lbrace \Pi_i\rbrace_{i = 1}^{N}\right)$ which achieves $\beta_{\text{QT}}$ of an odd $N$-cycle NC inequality and sets $j = 1$.
    \item \textbf{While} $j \leq n$:
    \begin{itemize}
        \item[] For $q \in \left[0, 1\right]$, choose $T_j = 0$ with probability $1 - q$ and $T_j = 1$ otherwise. 

        \item[] \textbf{If} $T_j = 0$: (Generation round) 
        
        \begin{itemize}
            \item[] Perform the measurement $\Pi_1$ on state $\rho$ to obtain outcome $a_1$.

            \item[] \textbf{If} $a_1 = 0$:
            \begin{itemize}
                \item[] Record $a_1$ as $k_j$ with probability $\omega_0$.
            \end{itemize}

            \item[] \textbf{Else} $a_1 = 1$:
            \begin{itemize}
                \item[] Record $a_1$ as $k_j$ with probability $\omega_1$.
            \end{itemize}
        \end{itemize}

        \item[] \textbf{Else}: (Spot-check round)
        \begin{itemize}
            \item[] Randomly choose $i \in \lbrace 1, \ldots, N\rbrace$ and $l \in \{0, 1\}$ with uniform probability and compute $l' = (i + (-1)^l \mod{N}) + 1$.
        
            \item[] Perform the measurement $\Pi_{i}$ on state $\rho$ to obtain outcome $a_i$. Perform the measurement $\Pi_{l'}$ on the post-measurement state to obtain outcome $a_{l'}$. Record $a_i, a_{l'}, i, l'$.
        \end{itemize} 
        
        \item[] Set $j = j+1$.
    \end{itemize}

    \item Using the statistics from all spot-check rounds evaluate $\beta$. 

    \item[] \textbf{If} $\beta_{\text{QT}} - \beta \geq \epsilon$:
    \begin{itemize}
        \item[] Abort the protocol.
    \end{itemize}

    \item[] \textbf{Else}: 
    \begin{itemize}
        \item[] Obtain the \textit{final random bit string} $\bm{k}$ as a concatenation of all bit values $k_j$.
    \end{itemize}
\end{enumerate}
\end{mdframed}
\caption{The protocol for RE using self-tests of contextual correlations in odd $N$-cycle scenarios.}
\label{fig:protocol}
\end{figure}


\textit{Protocol.---}We consider a single party, Alice, who aims to expand her private randomness (generate more randomness than she consumes). The idea of the protocol is that each round is randomly chosen by Alice (with bias $q$) to be one of two different classes of experimental rounds: \textit{spot-check rounds}, where Alice checks for a violation of a NC inequality~\eqref{eq:nc_inequality}, and \textit{generation rounds}, where the outcomes are selectively recorded, a process termed as \textit{post-selection}, and used to generate the final random bit string, $\bm{k}$, of length $m$. The step of post-selection is necessary in our protocol to ensure that the generated bit string is uniform.
We assume that Alice's measurements are repeatable, satisfy no-disturbance, and her devices are memoryless, which are necessary conditions for self-testing via NC inequalities. Additionally, Alice is assumed to have a trusted classical computer for performing computations, the correlations estimated during spot-checking rounds are identical to those in the generation rounds and all sources of errors are assumed to impact all experimental rounds equally. Further details on these assumptions are provided in the Appendix.

We present our QRE protocol in Fig.~\ref{fig:protocol} for NC inequalities corresponding to a graph $\mathcal{G}$ with an odd number of vertices, connected pairwise in a cycle (known as odd $N$-cycle NC inequalities; see Ref.~\cite{AQB13} for more details). However, our results are general and apply to arbitrary NC inequalities that certify a self-test, equivalent to that in Refs.~\cite{BRV19, BRX22}. We demonstrate additional examples by constructing an equivalent protocol using (non-robust) self-testing from even $N$-cycle NC inequalities in the Appendix. 




\textit{Security.---}In order to prove security we consider that a computationally unbounded quantum adversary, Eve, may share some correlations with Alice. Therefore, the joint state of Alice's post-measurement state corresponding to the final random bit string $\bm{k}$, of length $m$, and Eve can be written as a classical-quantum (cq) state 
\begin{equation}
    \rho_{KE} = \sum_{\bm{k}\in \lbrace 0, 1\rbrace^m} p(\bm{k}) \ket{\bm{k}}\!\bra{\bm{k}} \otimes \rho^{\bm{k}}_E,
\end{equation}
where $\rho^{\bm{k}}_E$ is the reduced state of Eve conditioned on $\bm{k}$. 

The bit string $\bm{k}$ is said to be $\epsilon_{sec}$-secure if Eve can distinguish it from a uniform distribution with probability at most $1/2 + \epsilon_{sec}/2$. For cq-states, this distinguishability is quantified by a bound on the trace norm:
\begin{equation}
    \norm{\rho_{KE} - 2^{-m}\mathds{1}_{K} \otimes \rho_E} \leq \epsilon_{sec},
    \label{eq:security}
\end{equation}
where $\mathds{1}_{K}$ denotes the $m \times m$ identity matrix and $\rho_E$ the reduced state of Eve, which is notably uncorrelated with the state of Alice. 

To show that Alice's final random bit string is $\epsilon_{sec}$-secure, we first demonstrate that Alice's post-measurement state, in each round, is uncorrelated with Eve when Alice uses a quantum strategy achieving $\beta = \beta_{\text{QT}}$ in the odd $N$-cycle NC inequality. We then show a robust version of this statement: Alice's post-measurement state, in each round, is $\mathcal{O}(\sqrt{\epsilon})$-close to being uncorrelated with Eve if her quantum strategy achieves $\beta \geq \beta_{\text{QT}} - \epsilon$. Finally, we prove that after Alice performs post-selection in the generation rounds, the $m$ outcomes in her final random bit string satisfies Eq.~\eqref{eq:security} with $\epsilon_{sec} = \mathcal{O}(m\sqrt{\epsilon})$. We defer the proofs of all the theorems to the Appendix.

\begin{theorem}
\label{thm:theorem1}
    For any quantum strategy $\tilde{\mathcal{S}} = \left(\tilde{\rho}_A, \lbrace \tilde{\Pi}_i = \ket{\tilde{v}_i}\!\bra{\tilde{v}_i}_A\rbrace_{i = 1}^{N}\right)$ implemented by Alice that achieves $\beta = \beta_{\text{QT}}$, the adversary Eve is uncorrelated with her post-measurement state, which can be written as
    \begin{equation}
    \rho_{AE} = \sum_{a_i = 0}^{1} p(a_i|i) \ket{a_i}\!\bra{a_i}_A \otimes \ket{\xi}\!\bra{\xi}_E,
\end{equation}
    where $\rho_{AE}$ denotes the joint state of Alice and Eve and $\ket{\xi}_E$ denotes the system held by Eve. 
\end{theorem}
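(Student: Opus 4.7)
The plan is to use the purification formalism to lift the single-system self-testing statement of Remark~\ref{def:self_test} into a statement about the joint Alice--Eve system, showing that Alice's marginal state is in fact pure and hence factorizes from Eve.

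First, I would purify Alice's input state: let $\ket{\Psi}_{AE}$ be a purification of $\tilde{\rho}_A$ and, without loss of generality, assign the purifying register to Eve, so as to allow her the most general side information. The observed value $\beta$ of the $N$-cycle NC inequality is computed from Alice's (sequential) measurement statistics alone and therefore depends only on $(\tilde{\rho}_A, \{\tilde{\Pi}_i\})$; the Eve register is silent at the statistics level. Consequently, the hypothesis of Remark~\ref{def:self_test} is satisfied on Alice's side with $\epsilon = 0$.

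Next, I would apply Remark~\ref{def:self_test} with $\epsilon = 0$: there is an isometry $V_A$ with $V_A^\dagger V_A = \mathds{1}$ such that $V_A \tilde{\rho}_A V_A^\dagger = \ket{v_0}\!\bra{v_0}$ and $V_A \tilde{\Pi}_i V_A^\dagger = \ket{v_i}\!\bra{v_i}$ for every $i$. Since an isometry preserves the spectrum of a density operator, the rank-one right-hand side forces $\tilde{\rho}_A$ to be pure, $\tilde{\rho}_A = \ket{\phi}\!\bra{\phi}_A$. A pure reduced state makes any purification a product state, so $\ket{\Psi}_{AE} = \ket{\phi}_A \otimes \ket{\xi}_E$ for some normalized $\ket{\xi}_E$. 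With this product structure in hand, the conclusion is immediate: the local measurement $\tilde{\Pi}_i$ on Alice's side leaves Eve's marginal unchanged, and conditioning on outcome $a_i$ (with probability $p(a_i|i)$) gives the joint state $\ket{a_i}\!\bra{a_i}_A \otimes \ket{\xi}\!\bra{\xi}_E$, where $\ket{a_i}\!\bra{a_i}_A$ is the associated post-measurement state. Averaging over $a_i$ yields the cq form claimed in the theorem.

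The main obstacle, in my view, is the conceptual step that the local self-test of Alice's marginal remains valid in the presence of a purifying Eve. The justification is that $\beta$ is a function of Alice's reduced state and measurements only, so the Eve register cannot influence the self-test; this is precisely what allows the pure-state target $\ket{v_0}\!\bra{v_0}$ to enforce purity of $\tilde{\rho}_A$ itself, and hence the factorization of $\ket{\Psi}_{AE}$. A minor further subtlety, should the self-test leave an auxiliary ``junk'' register on Alice's side after the isometry, is handled by the same argument applied to the ideal factor, which remains pure and therefore in a product with any junk--Eve purification.
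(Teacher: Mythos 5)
Your proposal is correct and follows essentially the same route as the paper's proof: purify $\tilde{\rho}_A$ with Eve as the purifying register, invoke the self-test at $\epsilon=0$ to identify Alice's state and measurements with the ideal pure-state strategy up to a local isometry (which forces the purification into product form $\ket{\phi}_A\otimes\ket{\xi}_E$), and conclude that the local measurement cannot correlate the outcome with Eve. The paper packages the same argument through an explicit extraction map $U=V_A^\dagger\otimes\mathds{1}_E$ and a partial-trace computation, but the substance—purity of Alice's marginal implying factorization from Eve—is identical to yours.
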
 

Next, we prove our results for the case when Alice achieves a violation of the odd NC inequality which is only $\epsilon$-close to the maximal one.

\begin{theorem}
    \label{thm:theorem2}
    For any quantum strategy $\tilde{\mathcal{S}} = \left(\tilde{\rho}_A, \lbrace \tilde{\Pi}_i = \ket{\tilde{v}_i}\!\bra{\tilde{v}_i}_A\rbrace_{i = 1}^{N}\right)$ implemented by Alice that achieves $\beta = \beta_{\text{QT}} - \epsilon$ $(\epsilon \ll 1)$, the post-measurement state of Alice and Eve, $\rho_{AE}$, is $\mathcal{O}( \sqrt{\epsilon})$-close (in trace distance) to a state which is uncorrelated to Eve, i.e.,
    \begin{equation}
        \norm{\rho_{AE} - \sum_{a = 0}^{1} p(a_i|i) \ket{a_i}\!\bra{a_i}_A \otimes \ket{\xi}\!\bra{\xi}_E} \leq \mathcal{O}(\sqrt{\epsilon}),
    \label{eq:aforementioned}
    \end{equation}
    where $\ket{\xi}_E$ denotes the system held by Eve.
    \end{theorem}

Currently, we have only shown that each of Alice's post-measurement state in the spot-check rounds is $\mathcal{O}(\sqrt{\epsilon})$-close to being uncorrelated with Eve. Next, we translate Theorem~\ref{thm:theorem2}, which is a result on Alice's spot-check rounds, to Alice's generation rounds. Note that Alice performs an additional step of post-selection in the generation rounds. Therefore, for these rounds the probability distribution of the outcomes $p(a_i|i)$ must be updated to the probability distribution after post-selection denoted by $\hat{p}(a_i|i)$ and given as
\begin{equation}
    \hat{p}(a_i|i) = \frac{\omega_{a_i} p(a_i|i)}{\sum_{a_i = 0}^{1} \omega_{a_i} p(a_i|i)} = \frac{1}{2} \quad \forall a_i \in \{0, 1\},
\end{equation} 
where it should be noted that, since the post-selection is performed in the generation rounds only, it does not introduce any loopholes in testing the NC inequality.
Furthermore, since we also assume that the spot-check and generation rounds produce identical correlations, the value $\beta$ observed in the spot-check rounds is associated to the generation rounds. 

\begin{theorem}
    For all quantum strategies $\tilde{S} = \left(\tilde{\rho}_A, \lbrace \tilde{\Pi}_i = \ket{\tilde{v}_i}\!\bra{\tilde{v}_i}_A\rbrace_{i = 1}^{N}\right)$ implemented by Alice that can be self-tested following Theorem~\ref{thm:theorem2}, after post-selection, the final cq-state shared between Alice and Eve, $\rho_{KE}$, satisfies
    \begin{equation}
        \norm{\rho_{KE} - 2^{-m}\mathds{1}_K \otimes \rho_E} \leq \mathcal{O}(m \sqrt{\epsilon}),
    \end{equation}
    where $\rho_E = \bigotimes_{j = 1}^{m} \ket{\xi_j}\!\bra{\xi_j}$ is the reduced state of Eve for the $m$ rounds where $\ket{\xi_j}$ is her state corresponding to the $j$th post-selected generation round.
    \label{thm:final_key}
\end{theorem} 

This recovers the security condition in Eq.~\eqref{eq:security} with $\epsilon_{sec} = \mathcal{O}(m\sqrt{\epsilon})$.


\textit{Randomness expansion rate.---}The expansion rate $r$ quantifies the amount of randomness produced as compared to the amount consumed per round and is calculated by comparing their respective min-entropies. The final random bit string has length and smooth min-entropy $m$, with smoothing parameter $\delta = \mathcal{O}(m\sqrt{\epsilon})$. The value of $m$ is given by: 
\begin{equation}
    m = n - nq - \left(n - nq \right)\frac{1 - \omega_0}{1 + \cos\pi/N} = \frac{2n(1-q)\cos(\pi/N)}{1 +\cos(\pi/N)}.
\end{equation} 

This is computed by noting that the experiment consists of $n$ rounds, of which $\sum_{i} T_i$ are selected for spot-checking (with $\mathbb{E}\left(\sum_{i} T_i\right) = nq$) and that some of the generation rounds are discarded due to post-selection. 

The randomness consumed is calculated by noting that it is required for (a) generating the Bernoulli random variables $T_j$ for $j \in \{1, \ldots, n\}$ with bias $q$ using the interval algorithm~\cite{HH97}, (b) randomly choosing which measurements $\Pi_i, \Pi_{l'}$ to implement in each spot-check round (for more details, see `Random selection of measurements' in Ref.~\cite{HXA23}) (c) post-selecting the outcome $1$ in the generation rounds (which occurs with probability $1/\left[1 + \cos (\pi/N)\right]$) with probability $\omega_0$, again using the interval algorithm. In total, the expected amount of initial private randomness required, $l_{\mathrm{in}}$, for $n$ rounds is 
\begin{equation}
    l_{\mathrm{in}} \leq n \left[h(q) + q\log 2 N + (1 - q)p(0|1)  h\left(\omega_0\right) \right] + 6,
    \label{eq:consumed_rand}
\end{equation}
where $h(\cdot)$ is the binary entropy function and $p(0|1) = \frac{1}{1 + \cos{\frac{\pi}{N}}}$ is the probability to obtain the outcome $0$ when performing the measurement $\Pi_1$. The constant $6$ arises from using the interval algorithm twice, due to an expected consumption of at most $nh(q) + 3$ uniform bits when producing $n$ bits with $q$ bias.

The expected expansion rate can be bounded by $r \geq \frac{m - l_{\mathrm{in}}}{n}$.
We plot the expected randomness expansion rate as a function of the total number of experimental rounds in Fig.~\ref{fig:rand} with $\beta = \beta_{\text{QT}}$ for different values of $n$ and $N$. We note that the randomness expansion rate tends to 1 in the limit, as $q \to 0$ and $n, N \to \infty$.


\begin{figure}
    \centering
    \includegraphics[width = 0.47\textwidth]{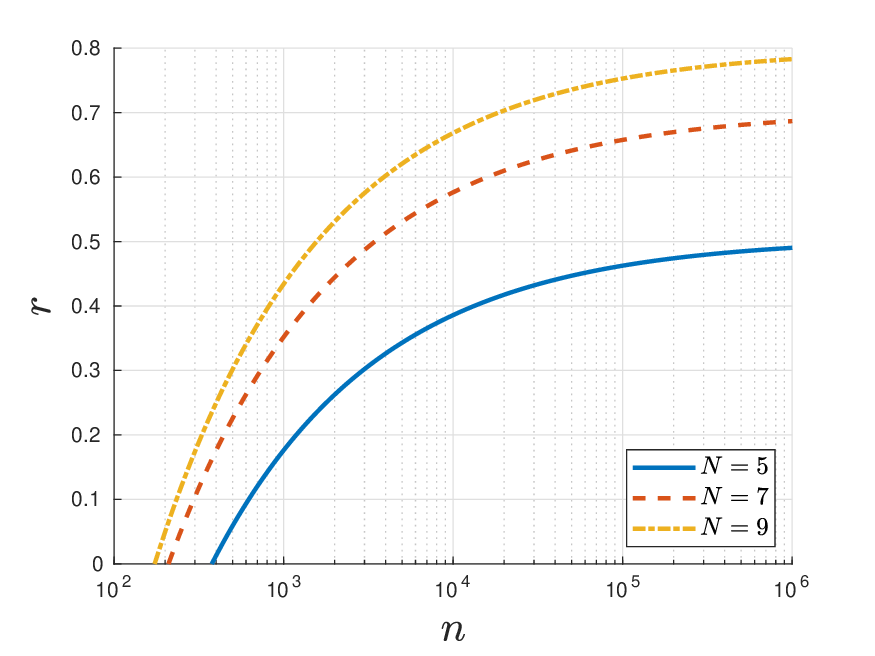}
    \caption{The expected net randomness expansion per round, $r$, as a function of total number of experimental rounds, $n$. We take the number of measurements as $N = 5$, $7$, and $9$, set the probability for spot-checking as $q = \frac{1}{\sqrt{n}}$ and set $\beta = \beta_{\text{QT}}$ (giving $\epsilon_{sec} = 0$).} 
    \label{fig:rand}
\end{figure}


\textit{Experimental accessibility.---}As an example of RE using our scheme, we consider a recent experiment of the KCBS self-test~\cite{HXA23}. The maximum violation of the loophole-free KCBS inequality observed experimentally in Ref.~\cite{HXA23} (Table $2$) is $\beta = 2.236$, which yields $\epsilon \approx 10^{-5}$, from $10^4$ experimental rounds. If the experiment had followed our spot-checking QRE protocol and observed this value of $\beta$, the security parameter would be $\epsilon_{sec} = \mathcal{O}(10^{-2}m)$. This implies that, although the experiment could generate fresh random bits, the linear scaling with $m$ limits the final random bit string to approximately 100 bits before $\epsilon_{sec} = \mathcal{O}(1)$. As a result, RE cannot be achieved in this case, since the randomness consumed exceeds 100 bits. We discuss possible mitigations in the discussion section and provide similar analysis (for a non-robust version) for the even $N$-cycle scenarios in the Appendix. 


\textit{Discussion---}We have presented a scheme for SDI-QRE via robust self-testing using NC inequalities that is secure against a quantum adversary. In particular, we
have shown that contextual correlations that result in a close-to-maximal violation of a NC inequality are sufficient to certify secure RE. Unlike many of the SDI schemes~\cite{UZZ13, UZZ20, PPF21}, our scheme does not require trusted measurements or preparation sources and is secure against a quantum adversary (and not only against a classical adversary). 
The advantage comes from using a different set of assumptions. Crucially, the assumptions required for the self-testing
can be (and have been) experimentally verified~\cite{LLS11, UZZ13, JRO16,ZKK17,MZL18, ZXX19, UZZ20, HXA23}.
Our approach extends self-testing techniques (used, so far, under DI assumptions) to SDI scenarios.

For achieving security, our QRE protocol requires providing certain additional experimental characterizations, including ensuring that the measurements are repeatable and satisfy no-disturbance, both of which are required in self-testing from NC inequalities. This increases the amount of required resources for the certification of security. However, this is expected as there is always a trade-off between the level of trust and security and the amount of resources required to achieve it. It may be argued that achieving correlations that offer close-to-maximal violation of NC inequalities may be a downside of our scheme, but as evidenced by the experiment in Ref.~\cite{HXA23}, it is already possible to achieve such correlations in the laboratory in a loophole-free manner under a similar set of assumptions that are required for our proposal. 

Similarly to other works on QRE based on Bell self-testing~\cite{WBC22}, our results only hold for close-to-maximum violation of the NC inequality. However, so far, no Bell experiment has achieved such correlations in a loophole-free manner~\cite{ZvR22, NDN22}.
In contrast, the KCBS self-test in Ref.~\cite{HXA23} and other recent experiments~\cite{PJC15, CLB15, SKC18} successfully achieved close-to-maximum violation of NC inequalities.
These experiments, when adapted to our protocol, and the observed correlations could be used to generate fresh random numbers. Although the violation of NC inequalities currently achieved in the lab does not allow randomness expansion (as the randomness consumed is more than generated), it indicates that experiments capable of performing randomness expansion with our protocol are close to being practically accessible. 

Several future research directions can be identified. Firstly, at present, our results only hold for an asymptotic number of experimental rounds, since all the relevant quantities are evaluated as an expected value. It would be interesting to extend this analysis to the case for a finite number of experimental rounds.
Secondly, the security parameter in our approach scales linearly with $m$, which is sub-optimal from a cryptographic perspective. This could be improved by relating (conditional) min-entropy to the violation of NC inequalities, similar to Bell inequalities, and then applying a randomness extractor \cite{M13, F24}. Alternatively, one could establish the self-testing statement directly over multiple rounds.
Finally, in a recent work~\cite{nath2024single}, the authors propose an alternate SDI-QRE protocol based on the Leggett-Garg inequality which is only secure against a classical adversary, while in Ref.~\cite{S23} the authors propose a variant of the SDI-QRE protocol based on Leggett-Garg inequality but prove its security against a quantum adversary. Thus, a promising future direction could involve applying our framework to other self-tests, particularly those based on self-testing of the Leggett-Garg inequality, as proposed in Ref.~\cite{shrotriya2022certifying}.


\begin{acknowledgments}
{{\em Acknowledgements---}}This research is supported by A*STAR C230917003. J.~S. acknowledges support from the National Science and Technological Council (NSTC) of Taiwan through grant Nos.~NSTC 112-2628-M-006-007-MY4 and NSTC 112-2811-M-006-033-MY4. A.~C.\ is supported by the MCINN/AEI project ``New tools in quantum information and communication'' (Project No.\ PID2020-113738GB-I00) and the Digital Horizon Europe project \href{https://doi.org/10.3030/101070558}{``Foundations of quantum computational advantage'' (FoQaCiA)} (Grant agreement No.\ 101070558).
We thank F.~J.\ Curchod, V.~J.\ Wright, and L.\ Wooltorton for useful comments.
\end{acknowledgments}



%

\newpage

\appendix

\onecolumngrid

\section{Non-contextuality inequalities}
\label{app:selftest}


In this section, we provide some details on the tight non-contextuality (NC) inequalities associated to the $N$-cycle scenarios, with $N \ge 4$ \cite{AQB13}, namely, scenarios whose graph of compatibility is a cycle on $N$ vertices. These scenarios are particularly important since a necessary condition for contextuality with ideal measurements is that the graph of compatibility of the scenario contains, induced, $N$-cycle scenarios, with $N \ge 4$ \cite{budroni2022kochen}. It is convenient to distinguish between the case in which $N$ is odd and the case in which $N$ is even.

For $N \ge 5$, with $N$ odd, the (only \cite{AQB13}) tight NC inequalities can be written as
\begin{equation}
    \mathcal{\beta} = \sum_{i = 1}^N p(1,0|i,i+1) \leq \beta_{\text{NCHV}},
    \label{oddineq}
\end{equation}
where the NC bound is
\begin{equation}
\beta_{\text{NCHV}} = \alpha(C_N) = \frac{N-1}{2},
\end{equation}
where $\alpha(C_N)$ denotes the independence number of $C_N$ \cite{CSW14}. Notice, in the case of Eq.~\eqref{oddineq}, $C_N$ is not only the graph of compatibility of the $N$ measurements in the scenario but also the graph of exclusivity of the $N$ events $(1,0|i,i+1)$ in \eqref{oddineq} (in a graph of exclusivity mutually exclusive events are represented by adjacent vertices). This second reason is why $\beta_{\text{NCHV}} = \alpha(C_N)$ \cite{CSW14}.
The maximum quantum value of $\beta$ is
\begin{equation}
\beta_{\text{QT}} = \vartheta(C_N) = \frac{N \cos \left(\pi/N\right)}{1 + \cos \left(\pi/N\right)},
    \label{eq:cyclic_lovasz}
\end{equation}
where $\vartheta(C_N)$ denotes the Lov\'asz number of $C_N$ \cite{CSW14}. $\beta_{\text{QT}}$ can be achieved with the following initial state
\begin{equation}
    \ket{v_0} = \left(1, 0, 0\right)^T,
\label{eq:cyclic_strategy}
\end{equation}
where $T$ indicates transposition, and the measurements 
\begin{equation}
    \Pi_i = \ket{v_i} \bra{v_i},\;\text{with }i \in \lbrace1, \ldots, N\rbrace,
\end{equation}
where
\begin{equation}
    \ket{v_i} = \left(\cos\theta, \sin\theta \sin\phi_i, \sin\theta \cos\phi_i\right)^T,
\end{equation}
with $\cos^2\theta = \frac{\cos \left(\pi/N\right)}{1 + \cos \left(\pi/N\right)}$ and $\phi_i = \frac{i \pi(N - 1)}{N}$. Notice that, for $i=1,\dots,N$ and taking $N+1=1$, $\bra{v_i}\ket{v_{i+1}}=0$, therefore $[\Pi_i,\Pi_{i+1}]=0$, i.e., $\Pi_i$ and $\Pi_{i+1}$ are jointly measurable.

For $N \ge 4$, with $N$ even, the (only \cite{AQB13}) tight NC inequalities are
\begin{equation}
    \mathcal{\gamma} = \left[ \sum_{i = 1}^{N-1} p(0,0|i,i+1) + p(1,1|i,i+1) \right] + p(0,1|N,1) + p(1,0|N,1) \leq \gamma_{\text{NCHV}},
\label{evenineq}
\end{equation}
where the NC bound is
\begin{equation}
\gamma_{\text{NCHV}} = \alpha(M_{2N}) = N-1,
\end{equation}
where $M_{2N}$ is the M\"obius ladder of order $2N$ \cite{AQB13}, which is the graph of exclusivity of the $2N$ events in \eqref{evenineq}. It should be noted that the exclusivity graph of these scenarios is not the same as the compatibility graph, which is cyclic.
The maximum quantum value is
\begin{equation}
\gamma_{\text{QT}} = \vartheta(M_{2N}) = \frac{N}{2} \left[1 + \cos \left(\pi/N\right)\right].
    \label{eq:cyclic_lovasz}
\end{equation}

$\gamma_{\text{QT}}$ can be achieved \cite{Pearle:1970PRD} with the two-qubit state
\begin{equation}
    \ket{v_0} = \frac{1}{\sqrt{2}}\left(1,0,0,1\right)^T,
\end{equation}
and the measurements $\Pi_i = \ket{v_i}\!\bra{v_i} = \frac{1}{2}\left[\mathds{1} + \bm{S}(i)\cdot\bm{\sigma}\right]$, which acts on first qubit if $i \in \lbrace 2, 4, \ldots, N \rbrace$ and on the second qubit if $i \in \lbrace 1, 3, \ldots,  N - 1 \rbrace$, $\bm{S}(i) = \left(\cos \left(i \pi/N\right), \sin \left(i \pi/N\right), 0\right)^T$ and $\bm{\sigma} = \left(\sigma_x, \sigma_y, \sigma_z\right)$.




\section{Assumptions}
\label{app:assumptions}
In this section, we detail the assumptions used in deriving our results. Many of these follow from the assumptions necessary for self-testing from NC inequalities, while others are standard in several randomness expansion schemes.

\begin{enumerate}
    \item \label{it:one} 
    All measurements are repeatable, meaning they yield the same result when repeated on the same system.

    \item \label{it:two}  
    Compatible measurements do not disturb each other when performed on the same system. Specifically, if compatible measurements $\Pi_i$ and $\Pi_j$ are performed, then $p(a,b|i,j) = p(b,a|j,i)$ for all $a,b$. In other words, the order of these measurements does not affect the outcomes.

    \item \label{it:three} 
    The measurement and state preparation devices are memoryless. This means that the state prepared in each experimental round is independent of all previous rounds, and measurement outcomes do not depend on prior settings or outcomes.

    \item \label{it:five} 
    Alice has access to a short string of private random bits to input into the protocol. The expected length of this string is given by
    \begin{equation}
        l_{in} \leq n\left[h(q) + q \log 2N + (1 - q) p(0|1) h(\omega_0) \right] + 6.
        \label{eq_app:consumed_rand}
    \end{equation}
    
    \item \label{it:six} 
    Alice's lab is built such that no information about measurement outcomes leaks to the adversary.

    \item \label{it:seven} 
    Alice has a trusted and private classical computer to perform all necessary computations.

    \item \label{it:eight} 
    The correlations observed by Alice from her devices are identical in both the spot-check and generation rounds.

    \item \label{it:eight-b}All sources of errors and noise impact all the experimental rounds equally.

    \item \label{it:nine} 
    Quantum theory is correct and complete.
\end{enumerate}

Assumptions~\ref{it:one}-\ref{it:three} are necessary for self-testing of NC inequalities, and assumption \ref{it:three} is also crucial for the derivation of our main result. Assumption~\ref{it:five} is necessary for the implementation of the protocol. Assumptions~\ref{it:six}-\ref{it:seven} ensure that the privacy of the RE protocol is not trivially compromised, by preventing the devices from directly communicating outcomes to the adversary. Assumption~\ref{it:eight} allows us to associate generation rounds with a specific experimental value of a NC inequality, assumption \ref{it:eight-b} allows us to relate the self-testing statement to each individual round and assumption~\ref{it:nine} ensures the theoretical framework’s correctness, whilst constraining the adversary's behavior.

The assumptions~\ref{it:one}-\ref{it:two} (required for the self-test) can be experimentally verified. Assumption~\ref{it:one} can be verified by evaluating $R = p(0,0|i,i) + p(1,1|i,i)$, where $p(a,a|i,i)$ denotes the probability of obtaining outcome $a$ twice when measurement setting $i$ is performed twice. If $ R = 1 $, the assumption holds exactly; if $R$ is close to $1$, the assumption holds approximately, as seen in experiments such as those in Refs.~\cite{Kirchmair:2009NAT,MZL18,WZL22,HXA23}. 
then the assumption holds exactly. In a similar manner, Assumption~\ref{it:two} can be tested by sequentially by implementing measurements which are compatible, to obtain $p(a, b|i,j)$, and then reversing the order of the measurements, to obtain $p(b,a|j,i)$. If the quantity $p(a, b|i,j) - p(b,a|j,i) = 0$ then the assumption holds exactly; if it is close to $0$, then the assumption holds approximately, as seen in Ref.~\cite{HXA23}.

Assumption~\ref{it:three} ensures the correctness of the self-test and guarantees that experimental rounds are independent. 
We note that there exist some methods that can help bypass this assumption~\cite{DF19, KZF18}, but we do not make use of them here. 
Specifically, it has been shown that the amount of private entropy produced when devices possess memory is comparable to that in independent and identically distributed (i.i.d.) scenarios, given sufficiently many rounds. In the context of randomness expansion from contextual correlations, new techniques may help address issues related to limited memory or small signaling between rounds~\cite{VEB23, FBM24}.

Assumption~\ref{it:eight} is undesirable for finite round experiments, but can be relaxed using an appropriate concentration inequality, as demonstrated in Refs.~\cite{MRCWB14, FM24}.

It is important to note that we do \textit{not} assume the state prepared by Alice is pure. However, it is known that only (close-to) pure states can exhibit the violations of NC inequalities necessary for our results. This restriction does not apply to the measurements, as we already assume they are repeatable and that their statistics satisfy the required relationships.

\section{Proofs of main theorems}
\label{app:proofs}
\setcounter{theorem}{0}
\setcounter{definition}{0}
In this section, we prove our main theorems. First, we show that for maximum violation of an NC inequality, i.e., $\beta = \beta_{\text{QT}}$, Alice's single-round post-measurement state is uncorrelated with Eve. Next, we use similar techniques to prove our results in the case of near-maximal violations, i.e., when $\beta \geq \beta_{\text{QT}} - \epsilon$, where $\epsilon \ll 1$. Finally, after establishing these relations at the level of single-round post-measurement states, we show that the final post-measurement state of Alice's post-selected final random bit string is $\mathcal{O}(m\sqrt{\epsilon})$ close to uniform, where $m$ is the length of the final random bit string generated.

In what follows, a quantum strategy $\mathcal{S}$, consisting of a set of quantum states and measurements, is denoted by a tuple, e.g., $\mathcal{S} = (\rho, \{ \Pi_i \}_{i=1}^{N})$, where, $\rho$ represents the quantum state used, and $\Pi_i$ denotes Alice's projective measurements having outcomes $a_i\in \lbrace 0,1 \rbrace$ and the post-measurement states by $\ket{a_i}\!\bra{a_i}_A$ where $a_i$ denotes the outcome and the subscript $i$ denotes the measurement setting and $\ket{0_i}_A = \ket{v_i}_A$.  In our case, the set of $\Pi_i$ are those that appear in the relevant NC inequality. Strategies and measurement outcomes with a tilde denote those that Alice may be implementing, which may or may not reach the maximum quantum value. In contrast, strategies without tildes are optimal and achieve $\beta = \beta_{\text{QT}}$.

First, we re-state the definition of self-testing of NC correlations from the main text, which will be used throughout the proofs of the theorems.

\begin{definition}[Robust self-testing from NC correlations; rephrased from Ref.~\cite{BRV19}, ]
\label{def:self_test}
    If $\mathcal{G} = (\mathcal{V}, \mathcal{E})$ is a cycle with $N \ge 5$ vertices and $N$ odd, and a quantum strategy $\mathcal{S}=\left(\ket{v_0}\!\bra{v_0}, \lbrace \Pi_i = \ket{v_i}\!\bra{v_i}\rbrace_{i = 1}^{N}\right)$, where $\ket{v_0}\!\bra{v_0}$ is the initial state, $\lbrace v_i\rbrace_{i = 1}^{N} \in \mathcal{V}$ and  $\text{tr}\left(\Pi_i \Pi_j\right) = 0$ for all $(v_i, v_j)\in \mathcal{E}$, achieves
$\beta = \beta_{\text{QT}}$, then the NC inequality Eq.~\eqref{eq:nc_inequality} serves as a robust local self-test for this strategy in the sense that for any other quantum strategy $\tilde{\mathcal{S}}= \left(\ket{\tilde{v}_0}\!\bra{\tilde{v}_0}, \lbrace \tilde{\Pi}_i = \ket{\tilde{v}_i}\!\bra{\tilde{v}_i}\rbrace_{i = 1}^{N}\right)$ that achieves $\beta = \beta_{\text{QT}} - \epsilon$, there exists an isometry $V_A$ such that $\norm{V_A\ket{\tilde{v}_i}\!\bra{\tilde{v}_i}V^\dagger_A - \ket{v_i}\!\bra{v_i}} \leq \mathcal{O}(\sqrt{\epsilon})$ for $i = 0, 1, \ldots, N$ and $V^\dagger_A V_A = \mathds{1}$.
\end{definition} 

Here, $\norm{A} = \text{tr}\left(\sqrt{A^\dagger A}\right)$ denotes the trace norm of a matrix $A$.

Next, we prove the theorems appearing in the main text. We begin with the case when $\beta = \beta_{\text{QT}}$ and derive a specific extraction map (isometric transformation that relates the implemented strategy to the ideal one) that will be used throughout all the proofs.




\begin{theorem}
\label{thm:theorem1}
    For any quantum strategy $\tilde{\mathcal{S}} = \left(\tilde{\rho}_A, \lbrace \tilde{\Pi}_i = \ket{\tilde{v}_i}\!\bra{\tilde{v}_i}_A\rbrace_{i = 1}^{N}\right)$ implemented by Alice that achieves $\beta = \beta_{\text{QT}}$, the adversary Eve is uncorrelated with her post-measurement state, which can be written as
    \begin{equation}
    \rho_{AE} = \sum_{a_i = 0}^{1} p(a_i|i) \ket{a_i}\!\bra{a_i}_A \otimes \ket{\xi}\!\bra{\xi}_E,
\end{equation}
    where $\rho_{AE}$ denotes the joint state of Alice and Eve and $\ket{\xi}_E$ denotes the system held by Eve.
\end{theorem}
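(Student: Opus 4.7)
The plan is to leverage the self-testing statement (Remark~\ref{def:self_test}) at $\epsilon = 0$ to force Alice's marginal state to be pure, which then mechanically decouples her system from Eve. Let $\sigma_{AE}$ denote the joint pre-measurement state that Alice shares with Eve, and $\tilde{\rho}_A = \text{tr}_E(\sigma_{AE})$ her reduced state. The implemented strategy $\tilde{\mathcal{S}} = (\tilde{\rho}_A, \{\tilde{\Pi}_i\}_{i=1}^{N})$ is assumed to attain $\beta = \beta_{\text{QT}}$ on the odd $N$-cycle NC inequality.

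First I would invoke Remark~\ref{def:self_test} with $\epsilon = 0$ to obtain an isometry $V_A$ satisfying $V_A \tilde{\rho}_A V_A^\dagger = \ket{v_0}\!\bra{v_0}$. Because $V_A^\dagger V_A = \mathds{1}$, conjugation by $V_A$ preserves the rank of any state, so $\tilde{\rho}_A$ must itself be rank one; equivalently, $\tilde{\rho}_A = \ket{\tilde{v}_0}\!\bra{\tilde{v}_0}$ for some unit vector $\ket{\tilde{v}_0}$. Purity of Alice's marginal immediately forces the joint pre-measurement state to factorize as $\sigma_{AE} = \ket{\tilde{v}_0}\!\bra{\tilde{v}_0}_A \otimes \rho_E$, and without loss of generality (by letting Eve hold a purification of her reduced state, which she is free to do as a maximally capable adversary) we may write $\rho_E = \ket{\xi}\!\bra{\xi}_E$.

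Alice then performs the measurement $\tilde{\Pi}_i$ on the $A$-register; because this acts trivially on $E$ in a product state, the post-measurement state takes exactly the claimed form
\begin{equation*}
\rho_{AE} = \sum_{a_i = 0}^{1} p(a_i|i)\, \ket{a_i}\!\bra{a_i}_A \otimes \ket{\xi}\!\bra{\xi}_E.
\end{equation*}
The entire argument hinges on the single nontrivial step of deducing purity of $\tilde{\rho}_A$ directly from the exact self-testing statement, and I expect this to be the main conceptual point: it is precisely where the global constraint $\beta = \beta_{\text{QT}}$ on observable correlations is converted into a hard structural constraint on Alice and Eve's joint state. The robust Theorem~\ref{thm:theorem2} should then follow by upgrading this exact rank argument to an $\mathcal{O}(\sqrt{\epsilon})$ trace-norm bound inherited directly from the robustness clause of Remark~\ref{def:self_test}, together with continuity of the measurement channel in the trace norm.
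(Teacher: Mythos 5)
Your argument is essentially correct and isolates the right conceptual core: the exact self-test forces Alice's marginal to be pure, purity of a marginal forces the joint pre-measurement state to be a product, and a measurement acting on one factor of a product state cannot correlate its outcome with the other factor. The paper reaches the same structural conclusion by a different presentation: it starts from a purification $\ket{\psi}_{AE}=\sum_l\sqrt{c_l}\ket{\phi_l}_A\otimes\ket{l}_E$, uses the self-test to identify every eigenvector $\ket{\phi_l}$ with $V_A\ket{v_0}$ (which is where purity is implicitly derived), and then pushes the post-measurement state through an explicit extraction map $U=V_A^\dagger\otimes\mathds{1}_E$, computing Eve's conditional states by partial trace. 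That extra machinery is unnecessary for the exact case, but it is set up deliberately because the identical extraction map is reused in the robust proof of Theorem 2. Your cleaner rank argument buys simplicity here at the cost of not generalizing: rank is not continuous in trace norm, so the robust case cannot be obtained by ``upgrading the rank argument'' as you suggest in your closing sentence --- the paper's Theorem 2 instead chains trace-norm triangle inequalities around the extraction map.

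Two points to tighten. First, the self-testing statement as given in the paper applies to strategies whose initial state is already pure, so you cannot literally invoke it to write $V_A\tilde{\rho}_AV_A^\dagger=\ket{v_0}\!\bra{v_0}$ for a possibly mixed $\tilde{\rho}_A$; you need the intermediate observation that $\beta$ is linear in the state, so each eigenvector of $\tilde{\rho}_A$ with nonzero weight must separately attain $\beta_{\text{QT}}$ and hence be mapped to $\ket{v_0}$ by the self-test, which is incompatible with their mutual orthogonality unless there is only one such eigenvector. This is the step the paper performs via the purification. Second, your final displayed state is written in terms of the ideal post-measurement states $\ket{a_i}\!\bra{a_i}_A$ rather than the implemented ones $\ket{\tilde{a}_i}\!\bra{\tilde{a}_i}_A$; this is legitimate, but only after conjugating Alice's register by the self-testing isometry, which is exactly the relabeling the paper's extraction map makes explicit.
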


\begin{proof}
    The basic qualitative idea for the proof is to show that the state and measurements of Alice are uncorrelated from Eve's.
    
    We begin by considering the purification $\ket{\psi}_{AE}$ of Alice's reduced state $\tilde{\rho}_A$, which can be written as
    \begin{equation}
        \ket{\psi}_{AE} = \sum_l \sqrt{c_l} \ket{\phi_l}_A \otimes \ket{l}_E,
    \end{equation}
    where $c_l \geq 0$ for all $l$, $\lbrace \ket{l}_E \rbrace$ form an orthonormal basis for Eve's Hilbert space and the state $\tilde{\rho}_A = \text{tr}_E\left(\ket{\psi}\!\bra{\psi}_{AE}\right) = \sum_l c_l \ket{\phi_l}\!\bra{\phi_l}$.
    Following Remark~\ref{def:self_test}, if Alice observes maximum violation of the NC inequality then the following relations hold: $\ket{\phi_l}_A = V_A\ket{v_0}_A$ and $\ket{\tilde{v}_i}\!\bra{\tilde{v}_i}_A = V_A \ket{v_i} \bra{v_i}_AV_A^\dagger$ for all $i\in \lbrace 1,\ldots, N \rbrace$, where $V_A$ is an isometry acting on Alice's system only and satisfies $V^\dagger_A V_A = \mathds{1}$. Next, we define the isometric transformation
    \begin{equation}
        U = V_A^\dagger \otimes \mathds{1}_E,
        \label{eq:iso_app}
    \end{equation}
    and call this our extraction map. Using this extraction map $U$, we show that
    \begin{equation}
        \begin{aligned}
            &U\left(\ket{\tilde{v}_i}\!\bra{\tilde{v}_i}_A \otimes \mathds{1}_E\right)U^\dagger U \ket{\psi}_{AE}\\
            &\quad = \sum_{l}\sqrt{c_l} \left(V^\dagger_A \ket{\tilde{v}_i}\!\bra{\tilde{v}_i}_A V_A \otimes \mathds{1}_E\right) \left( V^\dagger_A \ket{\phi_l}_A \otimes \ket{l}_E\right)\\
            &\quad = \left(\ket{v_i}\!\bra{v_i}_A \otimes \mathds{1}_E\right) \left( \sum_l \sqrt{c_l} \ket{v_0}_A \otimes \ket{l}_E\right)\\
            &\quad = \left( \Pi_i \ket{v_0}_A\right) \otimes \ket{\xi}_E,
        \end{aligned}
        \label{eq:self_test_extraction1}
    \end{equation}
    for all $i$, where in the final equality we have substituted $\ket{\xi}_E = \sum_l \sqrt{c_l} \ket{l}_E$.

    In a similar fashion to above, we can show that
    \begin{equation}
        U\left[\left(\mathds{1} - \tilde{\Pi}_i\right)\otimes \mathds{1}_E \ket{\psi}_{AE}\right] = \left[\left(\mathds{1} - \Pi_i\right) \ket{v_0}_A\right] \otimes \ket{\xi}_E.
        \label{eq:self_test_extraction2}
    \end{equation}

    We can collectively write Eqs.~\eqref{eq:self_test_extraction1} and \eqref{eq:self_test_extraction2} for all permissible values of $a$ and $i$ as
    \begin{equation}
        U\left(\ket{\tilde{a}_i}\!\bra{\tilde{a}_i}_A \otimes \mathds{1}_E\right) \ket{\psi}_{AE} = \left(\ket{a_i}\!\bra{a_i}_A \otimes \mathds{1}_E \right)\ket{v_0}_A \otimes \ket{\xi}_E. 
    \end{equation}

Next, we can evaluate the post-measurement cq-state of Alice and Eve, which is given by
\begin{equation}
    \rho_{AE} = \sum_{\tilde{a_i} = 0}^{1} \ket{\tilde{a}_i}\!\bra{\tilde{a}_i}_A \otimes \rho_E^{\tilde{a}_i} \quad \forall i,
    \label{eq:cq_state1}
\end{equation}
where $\rho_E^{\tilde{a}_i} = \text{tr}_A \left[\left(\ket{\tilde{a}_i}\!\bra{\tilde{a}_i}_A \otimes \mathds{1}_E\right) \ket{\psi}\!\bra{\psi}_{AE}\right]$.
    
For the moment, let us focus on the term $\text{tr}_A\left[\left(\tilde{\Pi}_i \otimes \mathds{1}_E\right) \ket{\psi}\!\bra{\psi}_{AE}\right]$ and using the fact that the isometry $U^\dagger U = \mathds{1}$ (since $V_A$ is unitary), we have
\begin{equation}
    \begin{aligned}
        &\text{tr}_A\left[\left(\tilde{\Pi}_i \otimes \mathds{1}_E\right) \ket{\psi}\!\bra{\psi}_{AE}\right] \\
        &\quad =  \text{tr}_A \left[U^\dagger U \left(\tilde{\Pi}_i \otimes \mathds{1}_E\right) \ket{\psi}\!\bra{\psi}_{AE} U^\dagger U\right]\\
        &\quad = \text{tr}_A \left[U^\dagger \left(\Pi_i \ket{v_0}\!\bra{v_0}_A \otimes \ket{\xi}\!\bra{\xi}_E \right)U\right]\\
        &\quad = \text{tr} \left(\Pi_i \ket{v_0}\!\bra{v_0}_A\right) \ket{\xi}\!\bra{\xi}_E\\
        &\quad = p(1|i) \ket{\xi}\!\bra{\xi}_E,
    \end{aligned} 
    \label{eq:cq_final_state1}
\end{equation}
where ${\rm tr}\left(\Pi_i \ket{v_0}\!\bra{v_0}_A\right) = p(0|i)$. It should be noted that here Eve's state is independent of Alice's measurement outcome.
Performing a similar analysis for the term $\text{tr}_A\left[\left(\mathds{1} - \tilde{\Pi}_i \otimes \mathds{1}_E\right) \ket{\psi}\!\bra{\psi}_{AE}\right]$ we obtain
\begin{equation}
    \text{tr}_A\left[\left(\mathds{1} - \tilde{\Pi}_i \otimes \mathds{1}_E\right) \ket{\psi}\!\bra{\psi}_{AE}\right] = p(0|i) \ket{\xi}\!\bra{\xi}_E.
    \label{eq:cq_final_state2}
\end{equation}

Substituting Eqs.~\eqref{eq:cq_final_state1} and \eqref{eq:cq_final_state2} in Eq.~\eqref{eq:cq_state1} results in the post-measurement state being 
\begin{equation}
    \rho_{AE} = \sum_{a = 0}^{1} p(a_i|i) \ket{a_i}\!\bra{a_i}_A \otimes \ket{\xi}\!\bra{\xi}_E,
\end{equation}
where it can be seen that Eve is now uncorrelated with the results of Alice's measurements. This concludes the proof.
\end{proof}

Next, we move on to prove our results in the case when Alice observes close-to-maximum violation of the odd $N$-cycle NC inequality. Note that we will still use the same isometric transformation as defined in Eq.~\eqref{eq:iso_app}.




\begin{theorem}
    \label{thm:theorem2}
    For any quantum strategy $\tilde{\mathcal{S}} = \left(\tilde{\rho}_A, \lbrace \tilde{\Pi}_i = \ket{\tilde{v}_i}\!\bra{\tilde{v}_i}_A\rbrace_{i = 1}^{N}\right)$ implemented by Alice that achieves $\beta = \beta_{\text{QT}} - \epsilon$ $(\epsilon \ll 1)$, the post-measurement state of Alice and Eve, $\rho_{AE}$, is $\mathcal{O}( \sqrt{\epsilon})$-close (in trace distance) to a state which is uncorrelated to Eve, i.e.,
    \begin{equation}
        \norm{\rho_{AE} - \sum_{a = 0}^{1} p(a_i|i) \ket{a_i}\!\bra{a_i}_A \otimes \ket{\xi}\!\bra{\xi}_E} \leq \mathcal{O}(\sqrt{\epsilon}),
    \label{eq:aforementioned}
    \end{equation}
    where $\ket{\xi}_E$ denotes the system held by Eve.
    \end{theorem}

\begin{proof}
    The main idea of the proof is to show that Alice's measurements and state are close to being uncorrelated from Eve's. The proof can be broken down into four major steps.
    We first briefly describe these steps qualitatively and then move on to the formal derivation:
    \begin{enumerate}
        \item In the first step we bound the trace distance between the implemented measurement strategy of Alice, denoted by tildes and the ideal measurement strategy which achieves $\beta = \beta_{\text{QT}}$ for the same joint state of Alice and Eve denoted by $\ket{\psi}_{AE}$. For all permissible values of $a_i$, $\tilde{a}_i$ and $i$ we obtain
    \begin{equation}
            \norm{\begin{aligned}
            &U\left(\ket{\tilde{a}_i}\!\bra{\tilde{a}_i}_A \otimes \mathds{1}_E\right)U^\dagger \ket{\psi}\!\bra{\psi}_{AE}\\
            &~~\quad - \left(\ket{a_i}\!\bra{a_i}_A \otimes \mathds{1}_E\right)\ket{\psi}\!\bra{\psi}_{AE} 
            \end{aligned}}\leq \mathcal{O}(\sqrt{\epsilon}),
    \label{eq:self_test_extract_meas}
    \end{equation}
    
    \item In the second step we bound the distance between the joint state of Alice and Eve (which may be entangled), and the ideal state of Alice (which achieves $\beta = \beta_{\text{QT}}$ for the set of ideal measurement settings as in Remark~\ref{def:self_test}), which is uncorrelated from Eve's state. In this step we obtain
    \begin{equation}
        \norm{\left(\ket{a_i}\!\bra{a_i}_A \otimes \mathds{1}_E\right)\ket{\psi}\!\bra{\psi}_{AE} - \left(\ket{a_i}\!\bra{a_i}_A\otimes \mathds{1}_E\right)\left(\ket{v_0}\!\bra{v_0}_A\otimes \ket{\xi}\!\bra{\xi}_E\right)} \leq \mathcal{O}(\sqrt{\epsilon}).
        \label{eq:self_test_state_required}
    \end{equation}

    \item Then in the third step we apply the triangle property of the trace distance and using Eqs.~\eqref{eq:self_test_extract_meas} and \eqref{eq:self_test_state_required} to compile the aforementioned two steps together and obtain
    \begin{equation}
        \norm{
            U\left(\ket{\tilde{a}_i}\!\bra{\tilde{a}_i}_A \otimes \mathds{1}_E\right)U^\dagger \ket{\psi}\!\bra{\psi}_{AE}
                - \left(\ket{a_i}\!\bra{a_i}_A \otimes \mathds{1}_E\right) \left(\ket{v_0}\!\bra{v_0}_A \otimes \ket{\xi}\!\bra{\xi}_E\right) 
            }\leq \mathcal{O}(\sqrt{\epsilon}).
    \end{equation}

    \item Finally, we use the aforementioned results and the approach from our proof of Theorem~\ref{thm:theorem1} to obtain Eq.~\eqref{eq:aforementioned}.
    \end{enumerate}
    
    We start with the first step to bound the distance between the implemented and ideal measurement strategies of Alice, while keeping the joint state of Alice and Eve to be the same. In this step we note that
    \begin{equation}
    \begin{aligned}
            &\norm{\begin{aligned}
            &U\left(\ket{\tilde{a}_i}\!\bra{\tilde{a}_i}_A \otimes \mathds{1}_E\right)U^\dagger \ket{\psi}\!\bra{\psi}_{AE}\\
            &~~\quad - \left(\ket{a_i}\!\bra{a_i}_A \otimes \mathds{1}_E\right)\ket{\psi}\!\bra{\psi}_{AE} 
            \end{aligned}}\\
            & \leq \norm{U\left(\ket{\tilde{a}_i}\!\bra{\tilde{a}_i}_A \otimes \mathds{1}_E\right)U^\dagger - \ket{a_i}\!\bra{a_i}_A\otimes \mathds{1}_E}\\
            & = \norm{V^\dagger_A\ket{\tilde{a}_i}\!\bra{\tilde{a}_i}_A V_A - \ket{a_i}\!\bra{a_i}_A} \\
            & \leq \mathcal{O}(\sqrt{\epsilon}),
    \end{aligned} 
    \label{eq:self_test_state_app}
    \end{equation}
where in the first inequality we have used the fact that $\norm{\ket{\psi}\!\bra{\psi}_{AE}} \leq 1$, in the second inequality we use the definition $U  = V^\dagger_A \otimes \mathds{1}_E$ and in the last line we used the definition of the robust self-test of NC inequalities.
    
    For the second step in the proof we note that the reduced state of Alice (upto a local unitary transformation) can be re-written as $\tilde{\rho}_A = \sum_ip_i \tilde{\rho}_i$, where $\lbrace \tilde{\rho}_i\rbrace$ form an eigenbasis for $\tilde{\rho}_A$.
    Now, let $\ket{\psi}\!\bra{\psi}_{AE} = U \ket{\phi}\!\bra{\phi}_{AE}U^\dagger$, where (upto local unitary transformations) $\ket{\phi}\!\bra{\phi}_{AE} = \sum_i p_i \tilde{\rho}_i\otimes \rho_E^i$ be the joint state of Alice and Eve. Then, from the term on the left hand side in  Eq.~\eqref{eq:self_test_state_required} we have
    \begin{equation}
    \begin{aligned}
        &\norm{\left(\ket{a_i}\!\bra{a_i} \otimes \mathds{1}_E\right)\ket{\psi}\!\bra{\psi}_{AE} - \left(\ket{a_i}\!\bra{a_i}\otimes \mathds{1}_E\right)\left(\ket{v_0}\!\bra{v_0}_A\otimes \ket{\xi}\!\bra{\xi}_E\right)} \\
        &=\left[\norm{\left(\ket{a_i}\!\bra{a_i} \otimes \mathds{1}_E\right)}\right] \left[\norm{\ket{\psi}\!\bra{\psi}_{AE} -\left(\ket{v_0}\!\bra{v_0}_A\otimes \ket{\xi}\!\bra{\xi}_E\right)}\right] \\
        &\leq \norm{\ket{\psi}\!\bra{\psi}_{AE} - \ket{v_0}\!\bra{v_0}_A\otimes \ket{\xi}\!\bra{\xi}}\\
        &=\norm{U\ket{\phi}\!\bra{\phi}_{AE}U^\dagger - \ket{v_0}\!\bra{v_0}_A\otimes \ket{\xi}\!\bra{\xi}}\\
        &= \norm{U\left(\sum_i p_i \tilde{\rho}_i \otimes \rho_E^i\right)U^\dagger - \ket{v_0}\!\bra{v_0}_A\otimes \ket{\xi}\!\bra{\xi}}\\
        &\leq \sum_i p_i\norm{V_A \tilde{\rho}_i V_A \otimes \rho_E^i - \ket{v_0}\!\bra{v_0}_A \otimes \ket{\xi}\!\bra{\xi}_E}\\
        &\leq \sum_i p_i\left(\norm{V_A \tilde{\rho}_i V_A \otimes \rho_E^i - \ket{v_0}\!\bra{v_0}_A \otimes \rho_E^i} + \norm{\ket{v_0}\!\bra{v_0}_A \otimes \rho_E^i - \ket{v_0}\!\bra{v_0}_A \otimes \ket{\xi}\!\bra{\xi}_E}\right)\\
        &\leq \sum_i p_i \norm{V_A \tilde{\rho}_i V_A \otimes \rho_E^i - \ket{v_0}\!\bra{v_0}_A \otimes \rho_E^i}\\
        &\leq \sum_i p_i \norm{V_A \tilde{\rho}_i V_A - \ket{v_0}\!\bra{v_0}_A}\\
        &\leq \sum_i p_i \mathcal{O}(\sqrt{\epsilon})\\
        &= \mathcal{O}(\sqrt{\epsilon}),
    \end{aligned}
    \label{eq:self_test_state}
    \end{equation}
    where in the first inequality we have used $\norm{\ket{a_i}\!\bra{a_i}_A\otimes \mathds{1}_E} \leq 1$, in the second inequality we have used the fact that trace distance is convex in each of its arguments. In the third inequality we added and subtracted the term $\ket{v_0}\!\bra{v_0} \otimes \rho_E^i$ inside the norm and then used the triangle property of the trace distance which states that for any three operators $\rho$, $\tau$ and $\sigma$, we have $\norm{\rho- \sigma} \leq \norm{\rho-\tau} + \norm{\tau- \sigma}$. In the fourth inequality we use the fact that trace distance is always positive.

    Next, in the third step we add Eqs.~\eqref{eq:self_test_state_app} and \eqref{eq:self_test_state} together and apply the triangle property of trace distance to obtain
    \begin{equation}
        \begin{aligned}
            &\norm{\begin{aligned}
                &U\left(\ket{\tilde{a}_i}\!\bra{\tilde{a}_i}_A \otimes \mathds{1}_E\right)U^\dagger
                \ket{\psi}\!\bra{\psi}_{AE}\\
                &~~\quad - \left(\ket{a_i}\!\bra{a_i}_A \otimes \mathds{1}_E\right)\ket{\psi}\!\bra{\psi}_{AE} 
            \end{aligned}}
            + \norm{\begin{aligned}
                &\left(\ket{a_i}\!\bra{a_i}_A \otimes \mathds{1}_E\right)\ket{\psi}\!\bra{\psi}_{AE}\\
                &~~\quad - \left(\ket{a_i}\!\bra{a_i}_A \otimes \mathds{1}_E\right)\left(\ket{v_0}\!\bra{v_0}_A \otimes \ket{\xi}\!\bra{\xi}_E\right) 
            \end{aligned}}\\
            &\geq \norm{\begin{aligned}
                &U\left(\ket{\tilde{a}_i}\!\bra{\tilde{a}_i}_A \otimes \mathds{1}_E\right)U^\dagger \ket{\psi}\!\bra{\psi}_{AE}\\
                &~~\quad - \left(\ket{a_i}\!\bra{a_i}_A \otimes \mathds{1}_E\right) \left(\ket{v_0}\!\bra{v_0}_A \otimes \ket{\xi}\!\bra{\xi}_E\right)
            \end{aligned}}
        \end{aligned}.
    \end{equation}

    However, the sum on the left hand side of the equation is upper bounded by $\mathcal{O}(\sqrt{\epsilon})$. Substituting this in the above equation yields the result
    \begin{equation}
        \norm{\begin{aligned}
                &U\left(\ket{\tilde{a}_i}\!\bra{\tilde{a}_i}_A \otimes \mathds{1}_E\right)U^\dagger \ket{\psi}\!\bra{\psi}_{AE}\\
                &~~\quad - \left(\ket{a_i}\!\bra{a_i}_A \otimes \mathds{1}_E\right) \left(\ket{v_0}\!\bra{v_0}_A \otimes \ket{\xi}\!\bra{\xi}_E\right)
            \end{aligned}} \leq \mathcal{O}(\sqrt{\epsilon})
            \label{eq:final2}
    \end{equation}

Next, for the final step, we note a result from Ref.~\cite{R12} which states that for an operator $X \in \mathcal{H}_1 \otimes \mathcal{H}_2$, where $\mathcal{H}_i$ denotes the $i$-th Hilbert space, the norm satisfies $\norm{\text{tr}_1\left(X\right)} \leq \norm{X}$, where $\text{tr}_1 (\cdot)$ is the partial trace over the system $1$. Following Eq.~\eqref{eq:final2} we have (for all permissible values of $a_i$ and $i$)
\begin{equation}
    \norm{\begin{aligned}
    &\text{tr}_A\left[U\left(\ket{\tilde{a}_i}\!\bra{\tilde{a}_i}_A\otimes \mathds{1}_E\right)U^\dagger\ket{\psi}\!\bra{\psi}_{AE} \right]\\
    &~~ \quad -\text{tr}_A \left[\left(\ket{a_i}\!\bra{a_i}_A\otimes \mathds{1}_E\right) \left(\ket{v_0}\!\bra{v_0}_A \otimes \ket{\xi}\!\bra{\xi}_E\right)\right] 
    \end{aligned}}\leq  \mathcal{O}(\sqrt{\epsilon}).
\end{equation}

However, the term on the left hand side can be re-written as
\begin{equation}
\begin{aligned}
    &\norm{\begin{aligned}
    &\text{tr}_A\left[U\left(\ket{\tilde{a}_i}\!\bra{\tilde{a}_i}_A\otimes \mathds{1}_E\right)U^\dagger\ket{\psi}\!\bra{\psi}_{AE} \right]\\
    &~~ \quad -\text{tr}_A \left[\left(\ket{a_i}\!\bra{a_i}_A\otimes \mathds{1}_E\right) \left(\ket{v_0}\!\bra{v_0}_A \otimes \ket{\xi}\!\bra{\xi}_E\right)\right] 
    \end{aligned}}\\
    & = \norm{\text{tr}_A\left[U\left(\ket{\tilde{a}_i}\!\bra{\tilde{a}_i}_A\otimes \mathds{1}_E\right)U^\dagger\ket{\psi}\!\bra{\psi}_{AE} \right]
    -p(a_i|i)\ket{\xi}\!\bra{\xi}_E 
    }\\
    & = \norm{\begin{aligned}
        &\ket{a_i}\!\bra{a_i}_A \otimes \text{tr}_A\left[U\left(\ket{\tilde{a}_i}\!\bra{\tilde{a}_i}_A\otimes \mathds{1}_E\right)U^\dagger\ket{\psi}\!\bra{\psi}_{AE} \right]\\
    &~~\quad - p(a_i|i)\ket{a_i}\!\bra{a_i}_A\otimes \ket{\xi}\!\bra{\xi}_E 
    \end{aligned}},
\end{aligned}
\end{equation}
where in the last equality, we have taken a tensor product of the entire expression with $\ket{a_i}\!\bra{a_i}_A$. Therefore, after the re-writing and for all permissible values of $a_i$ and $i$, we have
\begin{equation}
    \norm{\begin{aligned}
        &\ket{a_i}\!\bra{a_i}_A \otimes \text{tr}_A\left[U\left(\ket{\tilde{a}_i}\!\bra{\tilde{a}_i}_A\otimes \mathds{1}_E\right)U^\dagger\ket{\psi}\!\bra{\psi}_{AE} \right]\\
    &~~\quad - p(a_i|i)\ket{a_i}\!\bra{a_i}_A\otimes \ket{\xi}\!\bra{\xi}_E 
    \end{aligned}} \leq  \mathcal{O}(\sqrt{\epsilon}).
\end{equation} 

Next, by using the property of sub-additivity of the trace distance and taking a summation over all possible values of $a_i$ we obtain 
\begin{equation}
\begin{aligned}
    &\norm{\rho_{AE}
     - \sum_{a_i = 0}^{1} p(a_i|i)\ket{a_i}\!\bra{a_i}_A\otimes \ket{\xi}\!\bra{\xi}_E 
   }\\
   & \leq \sum_{a_i = 0}^{1}\norm{\ket{a_i}\!\bra{a_i}_A \otimes \text{tr}_A\left[U\left(\ket{\tilde{a}_i}\!\bra{\tilde{a}_i}_A\otimes \mathds{1}_E\right)U^\dagger\ket{\psi}\!\bra{\psi}_{AE} \right]
     - p(a_i|i)\ket{a_i}\!\bra{a_i}_A\otimes \ket{\xi}\!\bra{\xi}_E}\\
     &\leq \mathcal{O}( \sqrt{\epsilon}),
   \end{aligned}
   \label{eq:final_spot_check}
\end{equation}
where we have substituted $\rho_{AE} =\sum_{a_i = 0}^{1} \ket{a_i}\!\bra{a_i}_A \otimes \text{tr}_A\left[U\left(\ket{\tilde{a}_i}\!\bra{\tilde{a}_i}_A\otimes \mathds{1}_E\right)U^\dagger\ket{\psi}\!\bra{\psi}_{AE} \right]$ and taken the summation out of the norm (following the property of sub-additivity of trace distance). This completes the proof. 
\end{proof}




\begin{theorem}
    For all quantum strategies $\tilde{S} = \left(\tilde{\rho}_A, \lbrace \tilde{\Pi}_i = \ket{\tilde{v}_i}\!\bra{\tilde{v}_i}_A\rbrace_{i = 1}^{N}\right)$ implemented by Alice that can be self-tested following Theorem~\ref{thm:theorem2}, after post-selection, the final cq-state shared between Alice and Eve, $\rho_{KE}$, satisfies
    \begin{equation}
        \norm{\rho_{KE} - 2^{-m}\mathds{1}_K \otimes \rho_E} \leq \mathcal{O}(m \sqrt{\epsilon}),
    \end{equation}
    where $\rho_E = \bigotimes_{j = 1}^{m} \ket{\xi_j}\!\bra{\xi_j}$ is the reduced state of Eve for the $m$ rounds where $\ket{\xi_j}$ is her state corresponding to the $j$th post-selected generation round.
    \label{thm:final_key}
\end{theorem} 

\begin{proof}
    The key ideas in the following proof are: 1) the assumption of memoryless measurement and state preparation devices ensures Alice's post-measurement state is independent of previous rounds imposing a product structure and 2) recursively applying the triangle inequality to the trace norm to extract single-round expressions, which can be bounded via the single round bound presented in Theorem~\ref{thm:theorem2}.

    Let $\rho_{K_jE}$ and $k_j$ denote the post-measurement $j$th state of Alice and the ideal $j$th key register uncorrelated with Eve, respectively. The proof follows by recursively applying the triangle inequality $m$ times. We start  
    \begin{equation}
        \begin{aligned}
            &\norm{\rho_{KE} - 2^{-m}\sum_{\bm{k}\in \lbrace 0, 1\rbrace^m}\ket{\bm{k}}\!\bra{\bm{k}} \otimes \rho_E} \\
            &= \norm{\bigotimes_{j=1}^m \rho_{K_jE} - \bigotimes_{j=1}^m 2^{-1} \left(\sum_{k_j \in \lbrace 0, 1 \rbrace}\ket{k_j}\!\bra{k_j} \otimes \rho_{E}\right)} \\
            &= \norm{\begin{aligned}
            &\bigotimes_{j=1}^m \rho_{K_jE} - \bigotimes_{j=1}^{m - 1} 2^{-1} \left(\sum_{k_j \in \lbrace 0, 1 \rbrace}\ket{k_j}\!\bra{k_j} \otimes \rho_{E}\right) \otimes \rho_{K_mE}\\
             &\qquad + \bigotimes_{j=1}^{m - 1} 2^{-1} \left(\sum_{k_j \in \lbrace 0, 1 \rbrace}\ket{k_j}\!\bra{k_j} \otimes \rho_{E}\right) \otimes \rho_{K_mE} - \bigotimes_{j=1}^m 2^{-1} \left(\sum_{k_j \in \lbrace 0, 1 \rbrace}\ket{k_j}\!\bra{k_j} \otimes \rho_{E}\right)
             \end{aligned}} \\
             &\leq \norm{\bigotimes_{j=1}^m \rho_{K_jE} - \bigotimes_{j=1}^{m - 1} 2^{-1} \left(\sum_{k_j \in \lbrace 0, 1 \rbrace}\ket{k_j}\!\bra{k_j} \otimes \rho_{E}\right) \otimes \rho_{K_mE}} \\
            &\quad + \norm{\bigotimes_{j=1}^{m - 1} 2^{-1} \left(\sum_{k_j \in \lbrace 0, 1 \rbrace}\ket{k_j}\!\bra{k_j} \otimes \rho_{E}\right) \otimes \rho_{K_mE} - \bigotimes_{j=1}^{m } 2^{-1} \left(\sum_{k_j \in \lbrace 0, 1 \rbrace}\ket{k_j}\!\bra{k_j} \otimes \rho_{E}\right)},
        \end{aligned}
    \end{equation}
    where the first equality uses the fact that our assumption of memoryless devices give that $\rho_{KE} =\otimes_{j=1}^m \rho_{K_jE}$ and in the last inequality we have used the triangle inequality for trace distances. We can write the last inequality as
    \begin{equation}
        \begin{aligned}
            &\norm{\bigotimes_{j=1}^m \rho_{K_jE} - \bigotimes_{j=1}^{m - 1} 2^{-1} \left(\sum_{k_j \in \lbrace 0, 1 \rbrace}\ket{k_j}\!\bra{k_j} \otimes \rho_{E}\right) \otimes \rho_{K_mE}} \\
    &\quad + \norm{\bigotimes_{j=1}^{m - 1}\left[ 2^{-1} \left(\sum_{k_j \in \lbrace 0, 1 \rbrace}\ket{k_j}\!\bra{k_j} \otimes \rho_{E}\right)\right] \otimes \left[ \rho_{K_mE} - 2^{-1} \left(\sum_{k_j \in \lbrace 0, 1 \rbrace}\ket{k_j}\!\bra{k_j} \otimes \rho_{E}\right)\right]}\\
    &= \norm{\bigotimes_{j=1}^m \rho_{K_jE} - \bigotimes_{j=1}^{m - 1} 2^{-1} \left(\sum_{k_j \in \lbrace 0, 1 \rbrace}\ket{k_j}\!\bra{k_j} \otimes \rho_{E}\right) \otimes \rho_{K_mE}}  + \norm{\rho_{K_mE} - 2^{-1} \left(\sum_{k_j \in \lbrace 0, 1 \rbrace}\ket{k_j}\!\bra{k_j} \otimes \rho_{E}\right)}\\
    &\leq \norm{\bigotimes_{j=1}^m \rho_{K_jE} - \bigotimes_{j=1}^{m - 1} 2^{-1} \left(\sum_{k_j \in \lbrace 0, 1 \rbrace}\ket{k_j}\!\bra{k_j} \otimes \rho_{E}\right) \otimes \rho_{K_mE}} + \mathcal{O}(\sqrt{\epsilon}),
        \end{aligned} 
    \end{equation} 
    where in the final inequality we use Theorem~\ref{thm:theorem2} by noting that the single round key register $K_m$ corresponds to Alice's system $A$ and the $j$th state $\ket{k_j}$ is identified with Alice's post-measurement state $\ket{a_1}$. We also note that in the post-selected final generation rounds, $p(k_j) = \frac{1}{2}$ for $k_j \in \lbrace 0, 1\rbrace$.
    
    By recursively applying this technique $m$ times, we obtain
    \begin{equation}
       \norm{\rho_{KE} - 2^{-m}\sum_{\bm{k}\in \lbrace 0, 1\rbrace^m}\ket{\bm{k}}\!\bra{\bm{k}} \otimes \rho_E}
       \leq \sum_{i=1}^m \mathcal{O}(\sqrt{\epsilon}) = \mathcal{O}(m \sqrt{\epsilon}),
    \end{equation}
    where we use $\ket{\bm{k}}$ for $\bm{k} \in \{0,1\}^m$ to denote the pure state associated to the concatenation of $m$ key bits $k_j$ for $j \in \{1, \ldots, m\}$. Finally, we note that $\sum_{\bm{k}\in \lbrace 0, 1\rbrace^m}\ket{\bm{k}}\!\bra{\bm{k}} = \mathds{1}_K$ from completeness, and thus obtain $\norm{\rho_{KE} - 2^{-m}\mathds{1}_{K} \otimes \rho_{E}} \leq \mathcal{O}(m\sqrt{\epsilon})$, which completes the proof.
\end{proof}


\section{Protocols for even $N$-cycle NC inequalities}
\label{sec:protocol_even_n}

In this section, we apply our protocol to even $N$-cycle NC inequalities. 
Even $N$-cycle NC inequalities have corresponding exclusivity graphs that are non-cyclic (while their compatibility graphs are cyclic), unlike those discussed in the main text. It is important to note that a robust self-test for these scenarios has not yet been derived using the graph-theoretic approach. Therefore, security for even $N$-cycle only holds for the maximal violation of the corresponding NC inequality (i.e., using Theorem~\ref{thm:theorem1} to derive Theorem~\ref{thm:final_key}). Consequently, the results presented here should be interpreted as an upper bound on the expansion rate using these NC inequalities. Additionally, the protocol for even $N$-cycles requires the extra assumption of rank-$1$ projective measurements, compared to our original protocol.

\begin{figure}
\begin{mdframed}
{\bf Protocol for randomness expansion - even $N$-cycle NC inequalities}
\\
\hrule
\vspace{0.3cm}
\textbf{Parameters and notation:}
\begin{itemize}[leftmargin = 0.3cm]
    \item[] $n \in \mathbb{N}$ - Total number of rounds.

    \item[] $\mathcal{H}$ - Hilbert space of dimension $d \geq 4$.    

    \item[] $N \in \mathbb{N}$ - Total number of measurements. 

    \item[] $\Pi_i$, $i\in \lbrace 1, \ldots, N\rbrace$ - Two-outcome projective measurements with outcomes labeled by $a_i \in \lbrace 0, 1\rbrace$ corresponding to $\mathds{1} - \Pi_i$ and $\Pi_i$ respectively.

    \item[] $\rho \in \mathcal{H}$ - Initial quantum state on which the measurements are performed.

    \item[] $\beta$ - NC inequality corresponding to each scenario with maximum quantum value $\beta_{\text{QT}}$.
    
    \item[] $\left(\rho, \lbrace \Pi_i\rbrace_{i = 1}^{N}\right)$ - Quantum realization that obtains $\beta = \beta_{\text{QT}}$. 

    \item[] $\omega_1 = 1$ and $\omega_0 = \frac{1 + \cos{\frac{\pi}{2 N}}}{3 - \cos{\frac{\pi}{2 N}}}$. 
\end{itemize}
\hrule
\vspace{0.5mm}
\hrule
\vspace{0.3cm}
\textbf{Procedure}
\begin{enumerate}[leftmargin = 0.3cm]
    \item Alice chooses the quantum realization $\left(\rho, \lbrace \Pi_i\rbrace_{i = 1}^{N}\right)$ which achieves $\beta = \beta_{\text{QT}}$ for the NC inequality and sets $j = 1$.

    \item \textbf{While} $j \leq n$:
    \begin{itemize}
        \item[] For $q \in \left[0, 1\right]$, choose $T_j = 0$ with probability $1 - q$ and $T_j = 1$ otherwise. 

        \item[] \textbf{If} $T_j = 0$ (Generation round):
        \begin{itemize}
            \item[] Perform the measurement $\Pi_1$ on state $\rho$ to obtain outcome $a_1$.

            \item[] \textbf{If} $a_1 = 0$:
            \begin{itemize}
                \item[] Record $a_1$ as $k_j$ with probability $\omega_0$.
            \end{itemize}

            \item[] \textbf{Else} $a_1 = 1$:
            \begin{itemize}
                \item[] Record $a_1$ as $k_j$ with probability $\omega_1$.
            \end{itemize}
        \end{itemize}

        \item[] \textbf{Else} $T_j = 1$ (Spot-check round):
        \begin{itemize}
            \item[] Randomly choose $i \in \lbrace 1, \ldots, N\rbrace$ and $l \in \{0, 1\}$ with uniform probability and compute $l' = (i + (-1)^l \mod{N}) + 1$.
        
            \item[] Perform the measurement $\Pi_{i}$ on state $\rho$ to obtain outcome $a_i$. Perform the measurement $\Pi_{l'}$ on the post-measurement state to obtain outcome $a_{l'}$. Record $a_i, a_{l'}, i, l'$.

        \end{itemize}
        \item[] Set $j = j + 1$.
    \end{itemize}

    \item Using the statistics from all spot-check rounds evaluate $\beta$. 

    \item[] \textbf{If} $\beta \neq \beta_{\text{QT}}$:
    \begin{itemize}
        \item[] Abort the protocol.
    \end{itemize}

    \item[] \textbf{Else} 
    \begin{itemize}
        \item[] Obtain the bit string $\textbf{k}$ as a concatenation of all bit values $k_j$.
    \end{itemize}
\end{enumerate}
\end{mdframed}
\caption{The template protocol for randomness expansion using self-testing from even $N$-cycle NC inequalities.}
\label{fig:protocol_local_chsh}
\end{figure}

For even $N$-cycle NC inequalities, Alice can choose between $N$ different dichotomic measurements. The compatibility graph for this scenario is the M\"{o}bius ladder graph of order $2N$~\cite{CDL13,BRX22}, with the NC and contextual bounds given by $\beta_{\text{NCHV}} = N - 1$ and 
$\beta_{\text{QT}} = N\left[1 + \cos{\frac{\pi}{N}}\right]$~\cite{BRX22}, respectively. A non-robust self-test for this NC scenario, along with the probability assignment for each vertex for maximal quantum violation of the inequality, can be found in Ref.~\cite{BRX22}. This probability assignment for each vertex is $p(1|i) = \frac{1}{2}\left[1 + \cos\left({\frac{\pi}{N}}\right)\right]$, where $p(1|i)$ monotonically approaches 0.5 as $N$ increases. Accordingly, we define the post-selection probability in this case (following our original protocol) as $\omega_0 = \frac{1 + \cos{\frac{\pi}{N}}}{3 - \cos{\frac{\pi}{N}}}$.
The full protocol is presented in Fig.~\ref{fig:protocol_local_chsh}, with the corresponding parameters specified here.

Using Eq.~\eqref{eq_app:consumed_rand}, we plot the randomness expansion rate in Fig.~\ref{fig:r_vs_N} as a function of $N$. We find that the rate $r$ tends to $0.9637$ as $N$ increases.

\begin{figure}
    \centering
    \includegraphics[scale = 0.7]{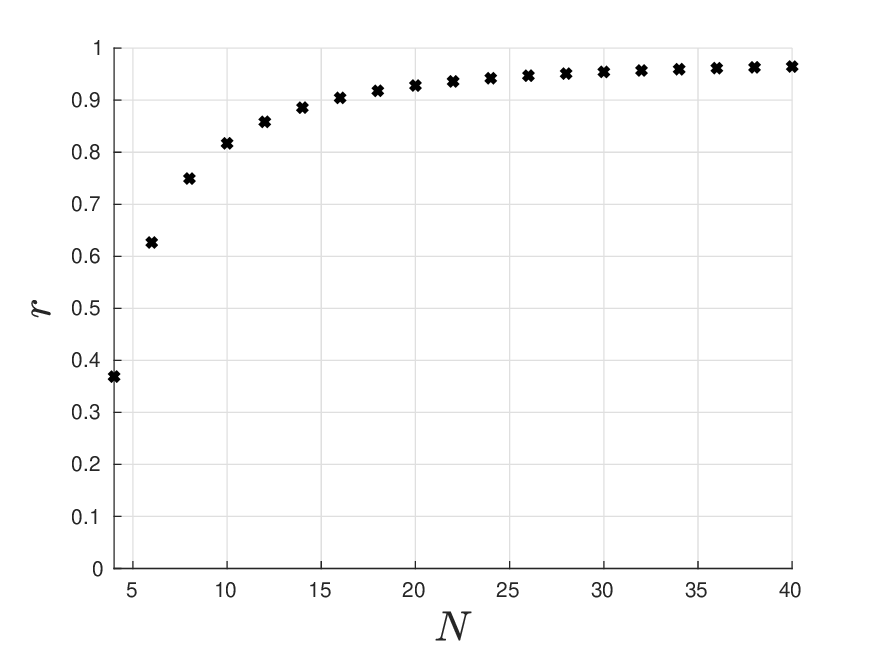}
    \caption{The rate of secure randomness expansion, $r$, as a function the total number of measurements, $N$. We fix the number of experimental rounds as $n = 10^6$ and the probability a round is selected to be a spot-check as $q = 1/\sqrt{n}$.}
    \label{fig:r_vs_N}
\end{figure}


\end{document}